\def\BibTeX{{\rm B\kern-.05em{\sc i\kern-.025em b}\kern-.08em
    T\kern-.1667em\lower.7ex\hbox{E}\kern-.125emX}}
\newtheorem{theorem}{Theorem}
\newtheorem{lemma}{Lemma}
\newtheorem{corollary}{Corollary}
\newtheorem{definition}{Definition}
\DeclareMathOperator*{\argmin}{arg\,min}
\DeclareMathOperator*{\argmax}{arg\,max}
\DeclareMathOperator*{\NE}{NE}
\DeclareMathOperator*{\PoA}{PoA}
\DeclareMathOperator*{\nel}{ne}
\DeclareMathOperator*{\opt}{opt}
\begin{document}
\title{\LARGE \bf Optimal Price of Anarchy in Cost-Sharing Games\\
\thanks{This work is supported by ONR Grant \#N00014-17-1-2060, NSF Grant \#ECCS-1638214, and SNSF Grant \#P2EZP2\_181618.}}

\author{Rahul Chandan \and Dario Paccagnan \and Jason R. Marden
\thanks{R. Chandan and J.R. Marden are with the Department of Electrical and Computer Engineering, University of California, Santa Barbara, USA. Email: \href{mail_to:rchandan@ucsb.edu}{rchandan@ucsb.edu}.}
\thanks{D. Paccagnan is with the Department of Mechanical Engineering and the Center of Control, Dynamical Systems and Computation, UC Santa Barbara, USA. Email: \href{mail_to:dariop@ucsb.edu}{dariop@ucsb.edu}.}}

\maketitle

\begin{abstract}
    The design of distributed algorithms is central to the study of multiagent systems control. 
    In this paper, we consider a class of combinatorial cost-minimization problems and propose a framework for designing distributed algorithms with a priori performance guarantees that are near-optimal. 
    We approach this problem from a game-theoretic perspective, assigning agents cost functions such that the equilibrium efficiency (price of anarchy) is optimized. 
    Once agents' cost functions have been specified, any algorithm capable of computing a Nash equilibrium of the system inherits a performance guarantee matching the price of anarchy. 
    Towards this goal, we formulate the problem of computing the price of anarchy as a tractable linear program. 
    We then present a framework for designing agents' local cost functions in order to optimize for the worst-case equilibrium efficiency. 
    Finally, we investigate the implications of our findings when this framework is applied to systems with convex, nondecreasing costs.
\end{abstract}

\section{Introduction}
The study of multiagent systems control has become more popular in recent years, with the advent of a variety of technologies that promise to change the way we interact with our surroundings and significantly improve our quality of life. In the near future, we envision fleets of autonomous driverless cars and unmanned aerial vehicles that will allow the transportation and shipping industries to reduce their cost and environmental impact dramatically \cite{spieser2014toward}. As an alternative example, advances in the distributed control of the power grid and transportation networks could potentially make our lives more affordable and comfortable, and extend the expected lifespan of existing infrastructure \cite{moradipari2018electric,leontiadis2011effectiveness,ma2013decentralized}. These and many other industries will benefit from advancements in our ability to control multiagent systems \cite{alexis2009coordination,kuntze2012seneka,servant2015controlled}.

One of the main challenges of controlling multiagent systems is in the design of control algorithms that can generate decision-making rules for the agents of the system. We seek algorithms that give rise to the desired global behaviour, and that do so reliably and within the system's constraints. In order to satisfy these requirements, it is often impractical for a central controller to communicate with and steer every individual agent in the network. Instead, we use distributed protocols, which would more easily satisfy a system's scalability, communication bandwidth, and privacy requirements.

Game theory provides an alternative approach to the problem of distributed system design \cite{bauso2016game}. Although game theory originated as a discipline in economics, it has quickly taken a foothold in the study of multiagent systems, as methodologies for the design of distributed protocols are readily available that have provable performance guarantees on the emergent system behaviour. Similar approaches have been applied to a wide range of design problems including communication network design \cite{song2011optimal,ahmad2009spectrum}, demand/response balancing \cite{ma2013decentralized}, optimal power generation \cite{gebraad2016wind}, and social influence \cite{brown2017studies}.

In the game-theoretic approach, a given optimization problem is solved in a distributed fashion by posing the problem as a game. This approach, as proposed in \cite{arslan2007autonomous,shamma2008cooperative}, can be summarized in two steps. First, the system designer assigns local cost functions to the agents in the system, then the agents' decision-making rules are formulated such that the overall system is driven to the desired equilibria. If, after these two steps, the equilibrium allocations optimize (or are provably close to) the system-level objective, the end result of this game-theoretic procedure is an efficient and distributed solver for the original optimization problem.

We seek to establish a framework that enables system designers to develop distributed algorithms that promote provably near-optimal system behaviour that is agnostic to the system's structure. Our work leverages recent results in \cite{paccagnan2018distributed} where the authors make significant headway towards achieving this objective by providing methodologies for tightly (exactly) calculating the worst-case performance of equilibria for a specified class of games. This is done by posing the problem of calculating the so-called price of anarchy as a linear program, and then providing a tractable solution to the problem of finding the decision-making rules that maximize the performance of the game, using the price of anarchy as a metric. While the results in \cite{paccagnan2018distributed} are limited to the class of welfare maximization problems, in this paper, we consider cost-minimization objectives.

\subsection{Problem Formulation} \label{sec:prob-form}
In this paper, we consider cost-minimization games with agents $N = [n] = \{1,\dots,n\}$, and $m$ resources in the set $\mathcal{R} = \{r_1, \dots, r_m\}$. Every resource $r$ is associated with an anonymous local cost function $c_r(\cdot):N\to \mathbb{R}$. We consider anonymous local cost functions of the form $c_r(j) = v_r \cdot c(j)$ for any $1{\leq}\! j {\leq}\! n$, where $v_r {\geq}\! 0$ is the relative value associated with resource $r$, and $c(\cdot):N\to \mathbb{R}_{>0}$ is the system's base cost function. We normalize this cost function such that $c(1) = 1$.

Each agent $i \in N$ selects an allocation $a_i$ from the corresponding action set $\mathcal{A}_i \subseteq 2^{\mathcal{R}}$, and the tuple $a = (a_1, \dots, a_n)$ contains the selections of all the agents. The system cost of allocation $a \in \mathcal{A} = \mathcal{A}_1 \times \dots \times \mathcal{A}_n$ is,
\begin{equation}\label{eq:system-cost}
    C(a) = \sum_{r \in \mathcal{R}} c_r(|a|_r) = \sum_{r \in \mathcal{R}} v_rc(|a|_r)\text{,}
\end{equation} 
where $|a|_r = |\{i \in N \text{ s.t. } r \in a_i\}|$ is the number of agents selecting resource $r$ in allocation $a$. We are interested in the allocation, $a^{\opt}$, that minimizes the system cost, i.e.,
\begin{equation} \label{eq:optimalallocation}
 a^{\opt} = \argmin_{a \in \mathcal{A}} C(a). 
\end{equation}

Ideally, we would design a distributed solution to \eqref{eq:optimalallocation}. Unfortunately, finding $a^\textrm{opt}$ is an intractable (NP-hard) problem, in general. Therefore, we settle for developing distributed and tractable algorithms that find approximate solutions, and with guarantees on the approximation ratio, using the game design approach. The system designer is free to assign agents' cost functions, $J_i(\cdot):\mathcal{A}\to \mathbb{R}$. In order to preserve the distributed nature of the system, an agent cost function $J_i$ should only depend on the information that is locally available to agent $i$. We use the following model for the cost of allocation $a$ to a given agent $i \in N$,
\begin{equation} \label{eq:player-cost}
 J_i(a) = \sum_{r \in a_i} v_rc(|a|_r)\cdot f(|a|_r) \text{,}
\end{equation}
where we refer to $f(\cdot):N\to \mathbb{R}_{\geq 0}$ as the distribution rule. The distribution rule represents the proportion of $v_rc(|a|_r)$ that every agent selecting $r$ experiences, and it is the only system parameter that the system designer can modify\footnote{More general forms of the agent cost function exist that are nonanonymous, but recent results (e.g. \cite[Thm. 2]{gairing2009covering}) suggest that these perform just as poorly in the worst-case as anonymous cost functions.}. We underline that, although we formalize the player cost function as \eqref{eq:player-cost}, there are no constraints on the distribution rule $f(\cdot)$.

We represent the above game with the tuple $G = (\mathcal{R}, \{v_r\}_{r \in \mathcal{R}}, N, \{\mathcal{A}_i\}_{i \in N}, f)$. In order to simplify the notation, we omit the subscripts for the value and action sets. 

We can now define the notion of Nash equilibrium in cost minimization games,
\begin{definition}[Nash equilibrium]
A given allocation $a^{\nel} \in \mathcal{A}$ is a pure Nash equilibrium of the cost-minimization game $G$ if $J_i(a^{\nel}) \leq J_i(a_i, a_{-i}^{\nel})$ for all allocations $a_i \in \mathcal{A}_i$ and all agents $i \in N$. $a_{-i}$ denotes all the entries in the tuple $a$ except $a_i$, i.e. $a_{-i} = (a_1, \dots, a_{i-1}, a_{i+1}, \dots, a_n)$.
\end{definition}

The goal of this paper is to find the distribution rule $f$ with which the multiagent system performs near-optimally for a class of games 
$\mathcal{G}_f = \{(\mathcal{R}, \{v_r\}, N, \{\mathcal{A}_i\}, f) \}$. 
Note that specifying a distribution rule $f(\cdot)$ results in a well-defined game for any resource set $\mathcal{R}$, resource valuation $\{v_r\}$, and action set $\{\mathcal{A}_i\}$, hence providing the desired scalability. The performance of a distribution rule $f$ will be measured using the so-called price of anarchy, which is defined as,
\begin{equation}\label{eq:poa}
    \PoA(f) = \sup_{G \in \mathcal{G}_f} \frac{\max_{a \in \NE(G)} C(a)}{\min_{a \in \mathcal{A}} C(a)} \geq 1,
\end{equation} 
where $\NE(G)$ is the set of all Nash equilibria of the game $G$\footnote{To capture all of its dependencies, price of anarchy should be denoted as $\PoA(f; N,c)$, here we use $\PoA(f)$ to simplify the notation.}. While the system cost $C(\cdot)$ in the above also depends on the game $G$, we do not explicitly indicate this for ease of presentation. A price of anarchy close to $1$ is desirable as the worst-case equilibrium efficiency nears optimality.

The price of anarchy is an approximation ratio for the worst-case performance of any algorithm capable of calculating Nash equilibria. In general, the computation of Nash equilibria is a PLS complete problem, but they can be computed in polynomial time for certain classes of games, see \cite{ackermann2008impact} for an illustration. 

\begin{figure*}[t]
\centering
\begin{tikzpicture}

\begin{axis}[
width=8cm, height=5cm,
axis background/.style={fill=white},
axis line style={black},
tick align=outside,
tick pos=left,
legend style={font=\small},
legend pos=north west,
x grid style={gray},
    xlabel      = $d$, 
xmajorgrids,
xmin=1, xmax=2,
y grid style={gray},
ymajorgrids,
ymin=1, ymax=3
]
\addplot [line width=0.75pt, color=black, mark=x, mark size=3]
table [row sep=\\]{%
1	1 \\
1.05	1.029548028 \\
1.1	1.060490371 \\
1.15	1.092991737 \\
1.2	1.127281336 \\
1.25	1.163385918 \\
1.3	1.201403239 \\
1.35	1.241434105 \\
1.4	1.283631136 \\
1.45	1.32807416 \\
1.5	1.374948439 \\
1.55	1.424339107 \\
1.6	1.476450613 \\
1.65	1.53141702 \\
1.7	1.589395553 \\
1.75	1.650600819 \\
1.8	1.715207025 \\
1.85	1.783453122 \\
1.9	1.855494118 \\
1.95	1.93162063 \\
2	2.012072435 \\
};
\addlegendentry{$\PoA(f^*)$}

\addplot [line width=0.75pt, color=green!50!black, mark=triangle, mark size=2]
table [row sep=\\]{%
1	1 \\
1.05	1.03597957 \\
1.1	1.074667928 \\
1.15	1.116196004 \\
1.2	1.160712213 \\
1.25	1.2083862\\
1.3	1.25939826\\
1.35	1.313939585\\
1.4	1.372288016\\
1.45	1.43465848\\
1.5	1.501366243\\
1.55	1.572722698\\
1.6	1.649103712\\
1.65	1.730912364\\
1.7	1.818545527\\
1.75	1.912557855 \\
1.8	2.013490386 \\
1.85	2.12192586 \\
1.9	2.238638908\\
1.95	2.364345667\\
2	2.5\\
};
\addlegendentry{$\PoA(f_{\textrm{SV}})$}

\addplot [line width=0.75pt, color=blue, mark=*, mark size=2]
table [row sep=\\]{%
1	1 \\
1.05	1.070526271 \\
1.1	1.143549807 \\
1.15	1.219140506 \\
1.2	1.297403895 \\
1.25	1.378416751 \\
1.3	1.462287603 \\
1.35	1.549114681 \\
1.4	1.639021832 \\
1.45	1.732081616 \\
1.5	1.828420976 \\
1.55	1.928156875 \\
1.6	2.031446796 \\
1.65	2.138351331 \\
1.7	2.249010435 \\
1.75	2.363563308 \\
1.8	2.482190285 \\
1.85	2.604980723 \\
1.9	2.732165788 \\
1.95	2.86377044 \\
2	3.00003 \\
};
\addlegendentry{$\PoA(f_{\textrm{MC}})$}

\end{axis}

\end{tikzpicture}
\quad
\begin{tabular}[b]{ccc} \hline
    $d$ & $\frac{\PoA(f_{\textrm{SV}})}{\PoA(f^*)}$  & $\frac{\PoA(f_{\textrm{MC}})}{\PoA(f^*)}$\\ \hline 
    1   & 1     & 1 \\ 
    1.2 & 1.03  & 1.151 \\ 
    1.4 & 1.069 & 1.277 \\ 
    1.5 & 1.092 & 1.33 \\ 
    1.6 & 1.117 & 1.376 \\ 
    1.8 & 1.174 & 1.447\\ 
    2   & 1.242 & 1.491\\ \hline
    \multicolumn{3}{c}{\vspace{16pt}}\\
\end{tabular}
\quad     
\begin{tabular}[b]{cccc} \hline
    $j$ & $f^*$         & $f^{\textrm{MC}}$     & $f^{\textrm{SV}}$ \\ \hline 
    1   & 1             & 1                     & 1                 \\ 
    2   & 0.484         & 0.564                 & 0.5               \\ 
    3   & 0.318         & 0.385                 & 0.333             \\ 
    4   & 0.236         & 0.291                 & 0.25              \\ 
    5   & 0.189         & 0.234                 & 0.2               \\ 
    6   & 0.157         & 0.196                 & 0.167             \\ 
    7   & 0.134         & 0.168                 & 0.143             \\ 
    \vdots & \vdots & \vdots & \vdots\\    \hline
    \multicolumn{3}{c}{\vspace{1pt}}\\
\end{tabular}
\vspace{-10pt}
\caption{Price of anarchy of the optimal distribution rule $f^*$ plotted against $\PoA(f_{\textrm{MC}})$ and $\PoA(f_{\textrm{SV}})$. This game comprises $n = 20$ agents and a cost function of the form $c(j) = j^d \text{, }d\in [1,2]$. The exact prices of anarchy of the distribution rules were calculated using the tractable linear program developed in Section \ref{sec:lin-prog}. Observe that the optimal distribution rule found using the linear program in Theorem \ref{thm:opt_dist_rule} clearly outperforms the two other distribution rules, especially for cost functions of higher order (i.e. greater $d$). The left table delineates the relative efficiency of the other distribution rules when compared to the one found using our framework for varying values of $d$. The table on the right shows computed values of the three distribution rules when $d = 1.2$.}
\vspace{6pt}
\hrulefill
\vspace{-18pt}
\label{fig:poa-chart}
\end{figure*}

\subsection{Our Contributions}
The objectives of this paper are two-fold. First, we intend to develop a tractable method for tightly computing the price of anarchy within the class of problems considered. Second, we wish to design agents' cost functions of the form \eqref{eq:player-cost} such that the price of anarchy is minimized, i.e. we wish to find $f^*$ such that,
\begin{equation}\label{eq:objective}
    f^* = \argmin_{f \in \mathbb{R}^n} \PoA(f).
\end{equation}

This paper is inspired by the results in \cite{paccagnan2018distributed}, in which a linear program solution to the optimization of price of anarchy in welfare maximization games is presented. Though we focus on cost-minimization games in this work, many of our results extend from that paper. 

The main contributions of this paper are the following:
\begin{itemize}
    \item We demonstrate that the problem of calculating the exact price of anarchy of a given cost-minimization game (i.e. solving \eqref{eq:poa}) can be reformulated as a tractable linear program (Theorems \ref{thm:poa_as_lp} and \ref{thm:dual}) with two unknowns and $\mathcal{O}(n^2)$ constraints.
    \item We show that the price of anarchy can be expressed explicitly as the minimum over $\mathcal{O}(n^2)$ computations for special classes of the distribution rule (Corollary \ref{cor:nondecr}, Theorem \ref{thm:c_convex}).
    \item We prove that the problem of finding the distribution rule that minimizes the price of anarchy (i.e. \eqref{eq:objective}) is also a tractable linear program (Theorem \ref{thm:opt_dist_rule}) with $n+1$ unknowns and $\mathcal{O}(n^2)$ constraints.
\end{itemize}

\subsection{Related Works}\label{sec:lit-rev}
Over the past two decades, the characterization of the price of anarchy has been a research focus in the field of algorithmic game theory \cite{koutsoupias1999worst,nisan2007algorithmic}. Several recent advances have been made towards computing the price of anarchy of games and finding the optimal distribution rule. A popular method for lower-bounding the price of anarchy comes from smoothness arguments, e.g., see \cite{roughgarden2015intrinsic}. Unfortunately, this method is limited in its applicability to game design problems, as lower-bounds coming from smoothness are only tight for congestion games in which the sum over all agents' local costs is equal to the system cost (the budget-balanced constraint). For a formal proof smoothness arguments are loose for design problems, see \cite[Thm. 1]{paccagnan2018distributed}.

The work in \cite{marden2014generalized} provides a lower bound to the performance of a more general class of equilibria, but is limited to resource-allocation games using the Shapley value distribution rule (defined in Section \ref{sec:illustrat}) and in which agents can only have singleton strategies.

A novel distribution rule is presented in \cite{gairing2009covering}, and a lower-bound on its price of anarchy is found. Note that their results apply to more general notions of cost function and equilibrium than those considered in this paper. The authors of \cite{ramaswamy2017impact} complete the proof of tightness for the new distribution rule by constructing a game with price of anarchy equal to this lower-bound. In a general setting, it is inefficient to approach the design of distribution rules by examining their suitability for specific applications one at a time.

The goal of designing agents' cost functions that optimize for effiency metrics including the price of anarchy has already received much attention. Past application domains have included resource allocation \cite{gkatzelis2016optimal,von2013optimal,vetta2002nash,jensen2018optimal}, set covering \cite{gairing2009covering,philips2016importance}, communication networks \cite{chen2010designing}, and congestion pricing problems \cite{sandholm2002evolutionary,brown2018optimal}. For many of these problems, the aforementioned Shapley value distribution rule has been found to be optimal within the assumptions and constraints, even for nonanonymous setups. However, we are currently unable to design cost functions that optimize over efficiency measures in a general context.

Many works have explored the adverse effects of optimizing over the price of anarchy on the efficiency of the equilibria of a game by examining the tradeoffs with the so-called \textit{price of stability} (e.g. \cite{ramaswamy2017impact,von2013optimal,phillips2018design}). Their results suggest that focusing solely on optimizing the price of anarchy of a game (i.e. worst-case equilibrium efficiency) will reduce the efficiency of any equilibria with superior performance. The analysis of this tradeoff is important for our framework but is outside the scope of this paper.
\subsection{An illustration of our main result} \label{sec:illustrat}
We begin with a simple example to illustrate the results in this paper. Consider the problem in which we seek to coordinate a multiagent system with $n = 20$ agents, and convex and nondecreasing cost functions $c(j) = j^d\text{, } d\in [1,2]$. For this illustration, we borrow the two most widely studied distribution rules in the literature, the Shapley value \cite{von2013optimal,jensen2018optimal,chen2010designing,harks2014optimal} and marginal contribution \cite{ramaswamy2017impact,philips2016importance}, in order to benchmark the relative performance of the optimal distribution rule found using our proposed framework.
\begin{definition}[Distribution rules]
The Shapley value, $f_{\textrm{SV}}$, and marginal contribution $f_{\textrm{MC}}$ are defined for $j \in N$ as,
\begin{align}
    f_{\textrm{SV}}(j) &= \frac{1}{j}\text{, }\\
    f_{\textrm{MC}}(j) &= 1 - \frac{c(j-1)}{c(j)}\text{.}
\end{align}
\end{definition}
The Shapley value distribution rule is budget-balanced, which means that the player and system cost functions defined in \eqref{eq:system-cost} and \eqref{eq:player-cost} satisfy $C(a) = \sum_{i \in N} J_i(a)$. On the other hand, the marginal contribution distribution rule ensures $J_i(a) = C(a) - C(\emptyset,a_{-i})$, where $\emptyset$ denotes the null allocation. While our study is not limited to budget-balanced distribution rules, there are a number of recent results suggesting that the Shapley value has optimal price of anarchy over such rules \cite{von2013optimal,jensen2018optimal,chen2010designing,harks2014optimal}.

In Figure \ref{fig:poa-chart}, we compare the price of anarchy for the multiagent system when agents use the optimal distribution rule (Theorem \ref{thm:opt_dist_rule}) with the price of anarchy for the Shapley value and marginal contribution. Using Theorem \ref{thm:c_convex}, closed-form expressions for tightly characterizing the prices of anarchy for these two distribution rules were found. As seen in Figure \ref{fig:poa-chart}, the optimal distribution rule calculated using our linear program has equal or better price of anarchy when compared to $\PoA(f_{\textrm{SV}})$ and $\PoA(f_{\textrm{MC}})$, with significant relative improvement as $d$ increases (i.e. higher order cost functions). For example, when $d = 1.8$, the price of anarchy corresponding to the optimal distribution rule performs $1.17\times$ better than the Shapley value, and has $1.45\times$ improvement on the marginal contribution.
\section{Characterizing the price of anarchy} \label{sec:lin-prog}

In this section, we reformulate the problem of finding the price of anarchy for a given family of games $\mathcal{G}_f$ as a linear program. The program's solution is proven to be tight. This result will allow us to also pose the distribution function design problem as a linear program.

The following assumptions are necessary in order to ensure that we always have $C(a^{\textrm{opt}}) > 0$, which ensures that \eqref{eq:poa} is well-defined. 

\textbf{Standing Assumptions}
There is at least one agent $i \in N$ that has action set $\mathcal{A}_i$ such that $v_r > 0 \text{, } \forall r \in a_i \in \mathcal{A}_i$. With slight abuse of notation, we extend the definitions of $f$ and $c$ for $j=0$ and $j=n+1$, and assign that $f(0) = c(0) = 0$, $f(n+1) = f(n)$, and $c(n+1) = \infty$. We note that this does not change the results, but simplifies the notation.

A parametrization of the various allocations in the game, borrowed from \cite{ward2012oblivious}, is central to the development of our linear program. In order to parametrize the problem, we define the following sets.

\begin{definition}
We define the sets $\mathcal{I}$ and $\mathcal{I_R}$ as follows,
\begin{align*}
    \mathcal{I} &:= \{(a,x,b) \in \mathbb{N}_{\geq 0}^{3} \text{ s.t. } 1 \leq a + x + b \leq n \}, \\
    \mathcal{I_R} &:= \{(a,x,b) \in \mathcal{I} \text{ s.t. } a\cdot x\cdot b=0 \text{ or } a + x +b = n \}.
\end{align*} 
\end{definition}

Note that $\mathcal{I_R}$ contains all points $(a,x,b)$ on the planes bounding its superset $\mathcal{I}$, i.e., $a\!{=}0$, $x\!{=}0$, $b\!{=}0$, and $a\!{+}x\!{+}b\!{=}n$.


\begin{theorem}\label{thm:poa_as_lp}
    Given the family of games $\mathcal{G}_f\!{=} \{(\mathcal{R}, \{v_r\}, N, \{\mathcal{A}_i\}, f)\}$, the price of anarchy is,
    \[ \PoA(f) = 1/C^* \text{,}\]
    where $C^*$ is the solution to the following linear program,
    \begin{align}
        &C^* = \min_{\theta(a,x,b)} \sum_{a,x,b} 1_{\{b+x \geq 1\}}c(b+x)\theta(a,x,b) \\
        \text{s.t.} &{\sum_{a,x,b}}\! \Big[af(a{+}\!x)c(a{+}\!x) {-}\! bf(a{+}\!x{+}\!1)c(a{+}\!x{+}\!1)\Big]\theta(a,x,b) {\leq} 0 \nonumber\\
        &\sum_{a,x,b} 1_{\{a+x \geq 1\}}c(a{+}\!x)\theta(a,x,b) {=}\: 1 \nonumber\\
        &\theta(a,x,b) \geq 0 \quad \forall (a,x,b) \in \mathcal{I}\text{.} \nonumber
    \end{align}
\end{theorem}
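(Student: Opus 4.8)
The plan is to prove the two inequalities $\PoA(f) \le 1/C^*$ and $\PoA(f) \ge 1/C^*$ separately. The first direction shows that every game induces a feasible point of the linear program whose objective value bounds the equilibrium/optimum ratio, so the LP is a valid relaxation; the second reverses this by constructing, from an optimal LP solution, a game whose worst-case equilibrium realizes the bound.

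For the upper bound, fix any $G \in \mathcal{G}_f$, a Nash equilibrium $a^{\nel}$, and an optimal allocation $a^{\opt}$. I would classify each resource $r$ by the triple $(a_r, x_r, b_r)$ counting the agents selecting $r$ in $a^{\nel}$ only, in both allocations, and in $a^{\opt}$ only, so that $|a^{\nel}|_r = a_r + x_r$, $|a^{\opt}|_r = x_r + b_r$, and $(a_r, x_r, b_r) \in \mathcal{I}$. Collecting values by type, $\theta(a,x,b) := \sum_{r : (a_r,x_r,b_r) = (a,x,b)} v_r \ge 0$, rewrites $C(a^{\nel})$ as the equality-constraint expression $\sum 1_{\{a+x \ge 1\}} c(a+x)\theta$ and $C(a^{\opt})$ as the objective $\sum 1_{\{b+x \ge 1\}} c(b+x)\theta$. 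The Nash inequality then comes from summing the equilibrium conditions $J_i(a^{\nel}) \le J_i(a^{\opt}_i, a^{\nel}_{-i})$ over all $i$: swapping the order of summation sends the left side to $\sum_r (a_r+x_r) v_r c(a_r{+}x_r) f(a_r{+}x_r)$, while on the right the $x_r$ agents who already held $r$ leave its count unchanged and the $b_r$ agents who add it raise the count to $a_r{+}x_r{+}1$, yielding $\sum_r v_r[x_r c(a_r{+}x_r) f(a_r{+}x_r) + b_r c(a_r{+}x_r{+}1) f(a_r{+}x_r{+}1)]$; after cancellation this is exactly the first LP constraint. Since the ratio in \eqref{eq:poa} is invariant under rescaling all $v_r$, I would normalize $C(a^{\nel}) = 1$ (legitimate since the standing assumption keeps $C(a^{\nel})$ positive), making $\theta$ LP-feasible, so $C(a^{\opt}) \ge C^*$ and $C(a^{\nel})/C(a^{\opt}) \le 1/C^*$; taking the supremum over $G$ gives $\PoA(f) \le 1/C^*$.

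For the lower bound, take an optimal LP solution $\theta^*$ over the finite index set $\mathcal{I}$ and build a game realizing it. The construction instantiates, for each triple $(a,x,b)$ in the support, a rotationally symmetric batch of resources carrying total value $\theta^*(a,x,b)$, and gives every one of the $n$ agents exactly two strategies — a "Nash" strategy and an "optimal" strategy — laid out cyclically so that the all-Nash profile $a^{\nel}$ reproduces the local counts $(a_r,x_r,b_r)$ with cost $C(a^{\nel}) = 1$, while the all-optimal profile has cost $C^*$. Because each agent owns only these two strategies, the only deviation to test is Nash $\to$ optimal, and the cyclic symmetry makes every agent's deviation gain equal to $1/n$ times the summed quantity appearing in the LP constraint; since $\theta^*$ satisfies that constraint, no deviation is profitable and $a^{\nel}$ is a genuine Nash equilibrium. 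Hence $\PoA(G) \ge C(a^{\nel})/\min_a C(a) \ge 1/C^*$ because $\min_a C(a) \le C^*$, and combined with the upper bound this forces $\PoA(f) = 1/C^*$.

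I expect the tightness construction to be the main obstacle. The delicate points are (i) realizing all the prescribed local counts $(a_r,x_r,b_r)$ simultaneously when each of the $n$ agents may choose only between a single Nash subset and a single optimal subset of resources, and (ii) certifying that the resulting profile is a Nash equilibrium rather than merely stable against the particular deviation to $a^{\opt}$. The cyclic symmetry is the device that collapses each agent's two-strategy comparison onto the aggregated LP inequality, and the boundary set $\mathcal{I_R}$ together with the extended conventions $f(0) = c(0) = 0$, $f(n{+}1) = f(n)$, $c(n{+}1) = \infty$ is what lets the gadget close up consistently at the faces $x = 0$, $b = 0$, and $a+x+b = n$ without opening a profitable deviation at the edges.
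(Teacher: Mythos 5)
Your proposal is correct and follows essentially the same route as the paper: one direction aggregates the per-agent Nash inequalities into the single summed LP constraint via the $(a_r,x_r,b_r)$ resource classification and normalization $C(a^{\nel})=1$, and the other is exactly the paper's Lemma~\ref{lem:const_geq}, which realizes any feasible $\theta$ by a cyclic gadget of $n$ value-$\theta(a,x,b)/n$ resources per type with two strategies per agent (the paper certifies the equilibrium through the potential function, which is just a packaged form of your symmetry computation of the deviation gain). The only cosmetic difference is that the paper first restricts to two-action games and phrases the first direction as a constraint relaxation rather than as a feasibility argument.
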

\begin{proof} 
    The first observation is that the game $G$ with worst-case Nash equilibrium $a^{\nel}$ and optimal allocation $a^{\opt}$ has the same price of anarchy as the game $\hat{G}$ in which every agent $i \in N$ only has action set $\mathcal{A}_i = \{a_i^{\nel}, a_i^{\opt}\}$. We can therefore reduce the set of games $\mathcal{G}_f$ to the set $\hat{\mathcal{G}}_f$ and write,
    \begin{align*}
        \PoA(f) = {\sup_{G \in \hat{\mathcal{G}_f}}}\!& \frac{C(a^{\nel})}{C(a^{\opt})}\\
        \text{ s.t. }& J_i(a^{\nel}) {\leq}\: J_i(a_i^{\opt}, a_{-i}^{\nel}) \text{ } \forall i {\in}\: N\text{.}
    \end{align*}
    By assumption, there is at least one agent in the game covering a resource with nonzero value, which must mean that $C(a^{\nel}) > 0$. Therefore, without loss of generality, we can normalize the values such that $C(a^{\nel}) = 1$. We also relax the constraint on the definition of Nash equilibrium,
    \begin{align*}
        D^* = \inf_{G \in \hat{\mathcal{G}_f}} &C(a^{\opt})\\
        \text{ s.t. } & \sum_{i \in N} J_i(a^{\nel}) \leq \sum_{i \in N} J_i(a_i^{\opt}, a_{-i}^{\nel})\text{, }  C(a^{\nel}) = 1\text{,}
    \end{align*}
    where $\PoA(f) = 1/D^*$. We know that $D^* \leq C^*$ because of the relaxation, but we show in Lemma \ref{lem:const_geq} (found in the Appendix) that $D^* \geq C^*$ which implies equality. We now show that any game can be posed under the parametrization $\theta(a,x,b) \in \mathbb{R}, (a,x,b) \in \mathcal{I}$. For every triplet $(a,x,b) \in \mathcal{I}$, we assign to $\theta(a,x,b)$ the sum over the values of all resources $r \in \mathcal{R}$ for which exactly $x$ agents have $r \in a^{\textrm{ne}}$ and $r \in a^{\textrm{opt}}$, exactly $a$ agents have $r \in a^{\textrm{ne}}$ and $r \notin a^{\textrm{opt}}$, and exactly $b$ agents have $r \notin a^{\textrm{ne}}$ and $r \in a^{\textrm{opt}}$. The parametrized version of the problem appears as follows,
    \begin{align*}
        &D^* = \inf_{\theta(a,x,b)} \sum_{a,x,b} 1_{\{b+x\geq1\}} c(b\!{+}x)\theta(a,x,b) \\
        \text{s.t.} & {\sum_{a,x,b}}\! [af(a\!{+}x)c(a\!{+}x) {-}\! bf(a\!{+}x\!{+}1)c(a\!{+}x\!{+}1)]\theta(a,x,b) {\leq}\: 0, \\ 
        & \sum_{a,x,b} 1_{\{a+x\geq1\}} c(a\!{+}x)\theta(a,x,b) {=}\: 1, \\
        & \theta(a,x,b) {\geq}\: 0 \quad \forall (a,x,b) \in \mathcal{I}\text{.}
    \end{align*}
    The infimum must be attained because when $a+x$ is greater than zero, $\sum_{a,x,b} 1_{\{a+x\geq1\}} c(a+x)\theta(a,x,b) = 1$, and when $a+x$ is zero, $\theta(0,0,b)$ must be bounded because, $\sum_{b \in N} bf(1)c(1)\theta(0,0,b) \geq 0$ and, because $C(a^{\opt}) \leq C(a^{\nel}) = 1$. 
    This is in fact a linear program of the form, 
    \begin{align*}
        C^* =  \min_y &c^\top y \\
        \text{ s.t. } &e^\top y \leq 0 \text{, } d^\top y - 1 = 0 \text{, } -y \leq 0\text{.}
    \end{align*} 
    where $y$ is a column vector of all $(n+1)(n+2)(n+3)/6$ unkowns represented by $\theta(a,x,b) \text{, } \forall (a,x,b) \in \mathcal{I}$.
\end{proof}

We note that, given a cost function $c$ and a distribution rule $f$, the above linear program returns the price of anarchy as well as the worst-case equilibrium and optimal allocation, encoded in $\theta(a,x,b)$. 
By rewriting the linear program in its Lagrangian dual, and by exploiting the problem structure to reduce the number of constraints, we reformulate the primal as a linear program that has $\mathcal{O}(n^2)$ constraints and only two unknowns.
\begin{theorem}[The dual reformulation]\label{thm:dual}
    For the family of games $\mathcal{G}_f$, the price of anarchy is $1/C^*$, where $C^*$ is the solution of the following program,
    \begin{align*}
        &C^* = \max_{\lambda \in \mathbb{R}_{\geq0}, \mu \in \mathbb{R}} \mu \\
        \text{s.t. } &1_{\{b+x\geq1\}}c(b+x) - \mu 1_{\{a+x\geq1\}} c(a+x) \\
        & + \lambda \Big[af(a\!{+}x)c(a\!{+}x)- bf(a\!{+}x\!{+}1)c(a\!{+}x\!{+}1)\Big] \geq 0 \\
        & \forall (a,x,b) \in \mathcal{I_R}\text{.}
    \end{align*}
\end{theorem}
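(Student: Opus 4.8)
The plan is to dualize the linear program of Theorem~\ref{thm:poa_as_lp} and then exploit the structure of the resulting constraint to retain only the boundary indices $\mathcal{I_R}$.

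First I would form the Lagrangian dual. Recall from the proof of Theorem~\ref{thm:poa_as_lp} that the primal has the compact form $\min_{y \geq 0} c^\top y$ subject to $e^\top y \leq 0$ and $d^\top y = 1$, where the vectors $c$, $e$, $d$ collect, over $(a,x,b) \in \mathcal{I}$, the coefficients $1_{\{b+x \geq 1\}}c(b+x)$, $af(a{+}x)c(a{+}x) - bf(a{+}x{+}1)c(a{+}x{+}1)$, and $1_{\{a+x \geq 1\}}c(a+x)$, respectively. Introducing a multiplier $\lambda \geq 0$ for the inequality and a free multiplier $\nu$ for the equality, the Lagrangian is $(c + \lambda e + \nu d)^\top y - \nu$, whose infimum over $y \geq 0$ is finite (and equal to $-\nu$) precisely when $c + \lambda e + \nu d \geq 0$ componentwise. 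Writing $\mu = -\nu$, the dual becomes $\max_{\lambda \geq 0,\, \mu} \mu$ subject to $c - \mu d + \lambda e \geq 0$, that is,
\[ 1_{\{b+x \geq 1\}}c(b+x) - \mu\, 1_{\{a+x \geq 1\}}c(a+x) + \lambda\bigl[af(a{+}x)c(a{+}x) - bf(a{+}x{+}1)c(a{+}x{+}1)\bigr] \geq 0 \]
for every $(a,x,b) \in \mathcal{I}$. Since the primal is feasible and its value $C^*$ is finite (as argued in Theorem~\ref{thm:poa_as_lp}), strong duality for linear programs gives that this dual attains the same optimal value $C^*$.

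It then remains to show that enforcing the constraint only on $\mathcal{I_R}$ does not enlarge the feasible set, which is the crux of the argument and the step I expect to be the main obstacle. The key observation is to hold the two quantities $s_1 = a+x$ and $s_2 = b+x$ fixed and to view the constraint's left-hand side $F(a,x,b)$ as a function of $x$ alone along the line $a = s_1 - x$, $b = s_2 - x$. Along this line the arguments $b+x = s_2$, $a+x = s_1$, and $a+x+1 = s_1+1$ are all constant, so every factor $c(\cdot)$ and $f(\cdot)$ is fixed and $F$ reduces to an \emph{affine} function of $x$, the only $x$-dependence entering through the linear coefficients $a = s_1 - x$ and $b = s_2 - x$. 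I would then determine the admissible range of $x$: the constraints $a, b, x \geq 0$ and $a + x + b \leq n$ force $\max(0,\, s_1 + s_2 - n) \leq x \leq \min(s_1, s_2)$. At the upper endpoint $x = \min(s_1, s_2)$ either $a = 0$ or $b = 0$, while at the lower endpoint either $x = 0$ or $a + x + b = n$; in every case the endpoint satisfies $a \cdot x \cdot b = 0$ or $a + x + b = n$, hence lies in $\mathcal{I_R}$.

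Finally, because an affine function on an interval attains its minimum at an endpoint, the inequality $F \geq 0$ holding at the two endpoints (both in $\mathcal{I_R}$) implies $F \geq 0$ everywhere on the interval, in particular at the original interior index $(a,x,b)$. Since every index in $\mathcal{I} \setminus \mathcal{I_R}$ lies in the interior of such a constant-$(s_1,s_2)$ segment, the constraints indexed by $\mathcal{I_R}$ imply all constraints indexed by $\mathcal{I}$; the two feasible sets in $(\lambda,\mu)$ therefore coincide, and so do the optimal values, which yields the stated program. The dualization itself is routine; the delicate part is verifying both that $F$ is genuinely affine in $x$ (not merely monotone) along these segments and that each segment's endpoints always fall into $\mathcal{I_R}$.
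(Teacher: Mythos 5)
Your proposal is correct and follows essentially the same route as the paper: dualize the primal LP from Theorem~\ref{thm:poa_as_lp}, invoke strong duality, then fix $j=a+x$ and $l=b+x$ and observe that the constraint is affine in $x$ over the interval $[\max(0,j+l-n),\min(j,l)]$, whose endpoints lie in $\mathcal{I_R}$. The only cosmetic difference is that the paper selects the binding endpoint according to the sign of the slope $f(j{+}1)c(j{+}1)-f(j)c(j)$, whereas you note that both endpoints lie in $\mathcal{I_R}$ and an affine function attains its minimum at an endpoint; the conclusion is identical.
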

\begin{proof}
We set up the Lagrangian function $L(\lambda, \mu, \nu, a,x,b)$, with Lagrange multipliers $\lambda \in \mathbb{R}_{\geq 0}$, $\mathbf{\nu} \in \mathbb{R}^\mathcal{I}_{\geq 0}$, $\mu \in \mathbb{R}$, and define the dual function $g(\lambda, \mu, \mathbf{\nu})$ as,
\begin{align*}
    g&(\lambda, \mu, \nu) = \inf_{a,x,b} L(\lambda, \mu, \nu,a,x,b) \\
    &= \mu + \inf_{a,x,b} \sum_{a,x,b}\Big(1_{\{b+x\geq1\}}c(b+x) + \lambda e(a,x,b) \\
    & \qquad - \mu 1_{\{a+x\geq1\}} c(a+x) - \nu_{a,x,b}\Big)\theta(a,x,b)\text{,}\\
    &= \begin{cases}
    \mu &\text{if } 1_{\{b+x\geq1\}}c(b+x)+ \lambda e(a,x,b)\\
    & \quad - \mu 1_{\{a+x\geq1\}} c(a+x)  = \nu_{a,x,b} \geq 0 \\ 
    -\infty &\text{otherwise.}
    \end{cases}
\end{align*}
where $e(a,x,b) = af(a\!{+}x)c(a\!{+}x)- bf(a\!{+}x\!{+}1)c(a\!{+}x\!{+}1)$ for all $(a,x,b) \in \mathcal{I}$. The dual problem is,
\begin{align*}
    \max_{\lambda\geq0, \mu}& \mu \text{ s.t. } 1_{\{b+x\geq1\}}c(b+x) - \mu 1_{\{a+x\geq1\}} c(a+x) \\
    & \quad + \lambda e(a,x,b) \geq 0, \forall (a,x,b) \in \mathcal{I}
\end{align*}
We know that we have strong duality because the primal problem is linear. After defining the changes of variables $j = a+x$ and $l=b+x$, the constraint can be written as, 
\begin{align*}
    \mu c(j) &\leq c(l) + \lambda[(j-x)f(j)c(j)-f(j+1)c(j+1)(l-x)] \\
    &= c(l) + \lambda\Big[jf(j)c(j)-lf(j+1)c(j+1) \\
    &\qquad \qquad \qquad+ x[f(j+1)c(j+1)-f(j)c(j)]\Big].
\end{align*}
When $j$ and $l$ are held constant, if $f(j)c(j)$ is decreasing, we want $x$ to be as big as possible to tighten the constraint on $\mu$. Due to the same reasoning, if $f(j)c(j)$ is nondecreasing, we want $x$ to be as small as possible for constant $j$ and $l$. The value of $x$ is constrained by $x\geq 0$, $x\geq j+l-n$, $x\leq l$,$x \leq j$, and $x \leq j+l-1$. Thus,
\[
x = \begin{cases}
    \min\{j,l\} & \text{if } f(j+1)c(j+1) < f(j)c(j) \\
    \max\{0, j+l-n\} & \text{if } f(j+1)c(j+1) \geq f(j)c(j)\text{.}
\end{cases}
\]
We now show that for all possible $j$ and $l$, $(a,x,b)$ resides in the set $I_R$.
\begin{enumerate}
    \item When $f(j)c(j)$ is decreasing, $x=j$ when $j\leq l$ which implies that $a=0$, and $x=l$ when $j>l$ implying that $b=0$. In both cases, $a\cdot x\cdot b = 0$, and so $(a,x,b)\in \mathcal{I_R}$.
    \item When $f(j)c(j)$ is nondecreasing, $x=0$ when $j+l \leq n$, and when $j+l > n$, setting $x=j+l-n$ implies that $a+b+x=n$. These cases are included in $a\cdot x\cdot b=0$, and $a+x+b=n$, respectively. Therefore, $(a,x,b)\in \mathcal{I_R}$.
\end{enumerate}
\end{proof}

According to the previous theorem's proof, there is potential for simplification if we examine $f(j)w(j)$ that are nondecreasing and decreasing, separately. The following corollary demonstrates that the (dual) linear program can be rewritten with exactly $n^2$ constraints, and no more than $2$ unknowns for nondecreasing $f(j)w(j)$. A similar result exists for decreasing $f(j)w(j)$, but is left out for brevity.

\begin{corollary}\label{cor:nondecr}
    Consider the family of games $\mathcal{G}_f$.
    \begin{enumerate}[i)]
        \item If $f(j)c(j)$ is nondecreasing for all $j \in N$, the dual program further simplifies such that, for $\PoA = 1/C^*$,
        \begin{align*}
            &C^*= \max_{\lambda \in \mathbb{R}_{\geq 0}, \mu \in \mathbb{R}} \mu \\
            \text{s.t. } &\mu c(j) \leq c(l) \!{+} \lambda\Big[jf(j)c(j) \!{-} lf(j\!{+}1)c(j\!{+}1)\Big] \\
            & \qquad \forall j,l\in [0,n], 1 \leq j+l \leq n,\\
            &\mu c(j) \leq c(l) \!{+} \lambda\Big[(n\!{-}l)f(j)c(j) \!{-}(n\!{-}j)f(j\!{+}1)c(j\!{+}1)\Big], \\
            & \qquad \forall j,l\in [0,n], j+l > n.
        \end{align*}
        \item Additionally, if $f(j) \leq \frac{1}{j}f(1)c(1)\max_{l\in N}\frac{l}{c(l)}$, $\forall j \in N$,
        \[ \lambda^* = \frac{1}{f(1)c(1)}\min_{l\in N} \frac{c(l)}{l}. \]
    \end{enumerate}
\end{corollary}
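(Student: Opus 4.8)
The plan is to show that the optimal multiplier $\lambda^*$ in the program of part (i) sits exactly on the upper boundary of the feasible range of $\lambda$, and that this boundary equals $\tfrac{1}{f(1)c(1)}\min_{l\in N}\tfrac{c(l)}{l}$. Before anything else I would rewrite the hypothesis in a more usable form: since $\min_{l}\tfrac{c(l)}{l}=\big(\max_{l}\tfrac{l}{c(l)}\big)^{-1}$, the assumption $f(j)\le \tfrac1j f(1)c(1)\max_{l}\tfrac{l}{c(l)}$ is equivalent to $\lambda^*\, j f(j)\le 1$ for every $j\in N$, and I expect this multiplicative form to be the one that actually enters the constraints.

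Next I would extract the feasibility bound on $\lambda$. Setting $j=a+x=0$ (that is, $a=x=0$) in the constraints of part (i) and using the standing assumptions $c(0)=0$ and $f(0)c(0)=0$, the term $\mu\,1_{\{a+x\ge1\}}c(a+x)$ vanishes and each such constraint collapses to $0\le c(l)-\lambda\, l f(1)c(1)$, i.e. $\lambda\le \tfrac{c(l)}{l f(1)c(1)}$. Minimising the right-hand side over $l\in N$ shows that every feasible $\lambda$ obeys $\lambda\le \lambda^*$, so $\lambda^*$ is an a priori upper bound on the optimal multiplier. This part is routine.

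The substance is then to show the bound is attained, i.e. that increasing $\lambda$ up to $\lambda^*$ never decreases the objective. Writing $\mu^\star(\lambda)$ for the largest $\mu$ satisfying the remaining constraints (those with $j\ge1$), this function is a minimum of affine functions of $\lambda$ and hence concave, so it suffices to prove $\mu^\star$ is nondecreasing on $[0,\lambda^*]$; then its maximum over the feasible interval is at the right endpoint $\lambda=\lambda^*$. By concavity this reduces to checking that the left-derivative of $\mu^\star$ at $\lambda^*$ is nonnegative, i.e. that the constraint governing $\mu^\star$ as $\lambda\uparrow\lambda^*$ has nonnegative slope $jf(j)c(j)-lf(j+1)c(j+1)$ (or its second-regime analogue). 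This is precisely where I would invoke the reformulated hypothesis $\lambda^* jf(j)\le1$: it bounds $jf(j)c(j)$ from above by $c(j)/\lambda^*$ and, together with the nondecreasing property of $f(j)c(j)$ carried over from part (i), is meant to prevent the slope of the binding constraint from turning negative before $\lambda$ reaches $\lambda^*$. Combining monotonicity with the feasibility bound $\lambda\le\lambda^*$ then forces the optimum to $\lambda=\lambda^*$.

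I expect this last step to be the main obstacle, and a delicate one. The constraint slopes are not individually signed — those with $l$ large relative to $j$ are negative while those with $l$ small are positive — so the argument cannot proceed slope-by-slope; one must identify which constraint family controls $\mu^\star$ near $\lambda^*$ and certify nonnegativity of its slope there using the hypothesis together with the monotonicity of $f(j)c(j)$. An alternative I would keep in reserve is to bypass the derivative analysis and instead exhibit the optimal vertex directly: pair the tight $j=0$ constraint determining $\lambda^*$ with a tight $j\ge1$ constraint and verify via complementary slackness that the resulting $(\lambda^*,\mu)$ satisfies all remaining constraints, which would certify optimality without arguing global monotonicity.
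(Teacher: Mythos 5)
Your outline for part (ii) reproduces the skeleton of the paper's argument: the $j=0$ constraints give the cap $\lambda\le\lambda^*$, and your rewriting of the hypothesis as $\lambda^*\,j f(j)\le 1$ is exactly the form the paper uses. The genuine gap is that the attainment step --- the one you yourself flag as ``the main obstacle'' --- is never carried out, and it is the entire content of the claim. The paper resolves it by asserting a specific domination: for every $\lambda\le\lambda^*$, the constraints with $l=0$, which read $\mu\le\lambda\, j f(j)$ and are nondecreasing in $\lambda$, are at least as strict as every constraint with $l\ge 1$. Granting this, $\mu^\star(\lambda)$ equals $\min_{j}\lambda\, j f(j)$ on $[0,\lambda^*]$ and its maximum sits at the right endpoint. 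The comparison reduces to the single inequality $c(l)\ge \lambda\, l\, f(j{+}1)c(j{+}1)$, which the paper verifies in two cases, $j\ge l>0$ and $l>j>0$, using the monotonicity of $f(\cdot)c(\cdot)$ together with $f(l)\le 1/(l\lambda^*)$ (plus an analogous computation in the $j+l>n$ regime). Without some such identification of the binding constraint family, your concavity/left-derivative reduction is only a restatement of what must be proved, and your complementary-slackness fallback lands on exactly the same unproven inequality.

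Your caution that this step is ``delicate'' is well founded, and it is worth recording why. In the paper's case $j\ge l>0$, the chain invokes $f(j{+}1)c(j{+}1)\ge f(l)c(l)$ but then needs the \emph{reverse} inequality to replace $l f(j{+}1)c(j{+}1)$ by $l f(l)c(l)$ inside a subtracted term; and concretely, for $c(j)=j^2$, $f=f_{\textrm{SV}}$, $n\ge 3$, the constraint $(j,l)=(2,1)$ yields $\mu\le(1+\lambda)/4$, which at $\lambda=\lambda^*=1$ is strictly tighter than the $l=0$ bound $\mu\le\lambda$. So the $l=0$ family need not dominate; what survives in such examples --- and what your left-derivative criterion actually requires --- is only that the constraints binding near $\lambda^*$ have nonnegative slope $jf(j)c(j)-l f(j{+}1)c(j{+}1)$ (equivalently, that every negatively sloped constraint is slack at $\lambda^*$). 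Proving that statement, rather than gesturing at it, is the piece your proposal still owes; as written, neither the monotonicity reduction nor the vertex-pairing alternative closes it.
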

\begin{proof}
    \textbf{i)} From the proof of Theorem \ref{thm:dual}, we know that $x=0$ when $1\leq j+l \leq n$, $j,l \in [0,n]$. This gives,
    \[ \mu c(j) \leq c(l) + \lambda[jf(j)c(j) - lf(j+1)c(j+1)] \text{.}\]
    We know from the same proof that when $j+l > n$, $j,l \in [0,n]$, then $x=j+l-n=n-a-b$, i.e.,
    \[ \mu c(j) \leq c(l) \!{+} \lambda[(n\!{-}l)f(j)c(j) \!{-} (n\!{-}j)f(j\!{+}1)c(j\!{+}1)]\text{.} \]
    \textbf{ii)} We begin by observing that the constraints on the dual program corresponding to \(j=0\) are,
  \[ \lambda \leq \frac{c(l)}{l}\frac{1}{f(1)c(1)}\text{, }\forall l \in N,\]
  since \(j+l = l \leq n\), and part i) of this corollary. We denote the strictest of these bounds on \(\lambda\) as,
  \[\lambda^* = \frac{1}{f(1)c(1)}\min_{l \in N}\frac{c(l)}{l}.\]
  When \(l=0\), the constraint in part i) simplifies to \(\mu \leq \lambda jf(j)\). The rest of this proof amounts to showing that the maximum value of \(\mu\) is at \(\lambda^*\). We consider the cases \(1 \leq j+l \leq n\) and \(j+l > n\) separately.
  
  Case \(1 \leq j+l \leq n\): Here we show that the constraints corresponding to \(j > 0\) and \(l = 0\) are stricter on \(\mu\) for \(\lambda \leq \lambda^*\) than any other constraints, i.e.,
  \[ \lambda jf(j) \leq \frac{c(l)}{c(j)} + \frac{\lambda}{c(j)}[jf(j)c(j)-lf(j+1)c(j+1)], \]
  whis is equivalent to showing that,
  \[ \frac{c(l)}{c(j)} - \frac{\lambda}{c(j)}[lf(j+1)c(j+1)] \geq 0. \]
  When \(j \geq l > 0\) and since \(f(j)c(j)\) is nondecreasing, \(f(j+1)c(j+1) \geq f(j)c(j) \geq f(l)c(l)\), which supports the first inequality in,
  \begin{IEEEeqnarray}{Rl}
    &\frac{c(l)}{c(j)} - \frac{\lambda}{c(j)}[lf(j+1)c(j+1)] \\
    \geq & \frac{1}{c(j)}(c(l) - \lambda lf(l)c(l)) = \frac{f(l)c(l)}{c(j)}(\frac{1}{f(l)} - \lambda l) \\
    \geq & \frac{f(l)c(l)}{c(j)}(\lambda^* - \lambda) l\text{,}
  \end{IEEEeqnarray}
  where the last inequality holds due to the requirement that \(f(l) \leq \frac{1}{l}f(1)c(1)\max_{k\in N} \frac{k}{c(k)} = \frac{1}{l\lambda^*}\). If \(l > j > 0\), then \(l \geq j+1\) and \(f(l)c(l) \geq f(j+1)c(j+1)\), and the first inequality follows in the reasoning,
  \begin{IEEEeqnarray*}{Rl}
    &\frac{c(l)}{c(j)} - \frac{\lambda}{c(j)}[lf(j+1)c(j+1)] \\
    \geq & \frac{f(j+1)c(j+1)}{c(j)}\left(\frac{1}{f(l)}-\lambda l\right) \geq \frac{f(l)c(l)}{c(j)}(\lambda^* - \lambda) l\text{,}
  \end{IEEEeqnarray*}
  where the second inequality follows due to the same reasoning as above. As \(\lambda^* \geq \lambda\), and since \(f\) and \(c\) are nonnegative, it follows that \(\frac{f(l)c(l)}{c(j)}(\lambda^* - \lambda) l \geq 0\) and \(\frac{f(l)c(l)}{c(j)}(\lambda^* - \lambda) l \geq 0\).
  
  Case \(j+l > n\): Similar to the above, we must show that,
  \[ \lambda jf(j) \leq \frac{c(l)}{c(j)} + \frac{\lambda}{c(j)}[(n-l)f(j)c(j)-(n-j)f(j+1)c(j+1)], \]
  where the slope on the right-hand side is negative. This can be shown once again using the requirement that \(f(l) \leq \frac{1}{\lambda^* l}\), \(j+l-n > 0\), and the requirement that \(f(j)c(j)\) is nondecreasing for all \(j \in N\).
\end{proof}

Note that if the requirements for Corollary \ref{cor:nondecr} ii) are met, then the value of $\mu^*$ is determined by the strictest constraint at $\lambda = \lambda^*$. The expression for the price of anarchy can therefore be written explicitly as $\rm{PoA} = 1/C^*$, where $C^*$ is the minimum over the constraints, i.e.
\begin{align*}
    C^*\!{=} \min \begin{cases}
            \displaystyle \min_{1\leq j+l \leq n} \frac{c(l)}{c(j)} \:{+} \lambda^* \Big[ jf(j) \!{-} lf(j\!{+}1)\frac{c(j\!{+}1)}{c(j)}\Big]\\
            \displaystyle \min_{j+l>n} \frac{c(l)}{c(j)} \:{+} \lambda^* \Big[ (n \!{-} l)f(j)c(j) - \dots\\
            \qquad \qquad \qquad \qquad \dots\!{-} (n \!{-} j)f(j\!{+}1)\frac{c(j\!{+}1)}{c(j)}\Big]\text{,}
        \end{cases}
\end{align*}
where $j \neq 0$ and $j,l \in [0,n]$.

\section{Optimizing the Price of Anarchy}

We have developed a linear program for calculating the exact price of anarchy of a game given the number of players $n$, the cost function $c$ and the distribution rule $f$. In this section, we develop a framework that selects the optimal cost distribution rule for minimizing the price of anarchy. We show that this framework can also be posed as a linear program.

\begin{theorem}\label{thm:opt_dist_rule}
    Given the cost function $c$ and set $\mathcal{F}$ of permissible distribution rules, the optimal distribution rule
    \[ f^* = \argmin_{f \in \mathcal{F}} \PoA(f) \text{,}\]
    is in fact the solution to the following linear program 
    \begin{align}
        &f^* \in \argmax_{f \in \mathcal{F}, \mu \in \mathbb{R}} \mu\nonumber\\
        \text{s.t. } & 1_{\{b+x\geq1\}}c(b\!{+}x) \!{-} \mu 1_{\{a+x\geq1\}} c(a\!{+}x) \!{+} af(a\!{+}x)c(a\!{+}x) \nonumber\\
        & \quad\!{-} bf(a\!{+}x\!{+}1)c(a\!{+}x\!{+}1) \geq 0, \forall (a,x,b) \in \mathcal{I_R}\text{,} \label{eq:opt_distr_rule}
    \end{align}
    and the optimal price of anarchy is
    \[ \PoA(f^*) = \frac{1}{\mu^*}, \]
    where $\mu^*$ is the corresponding result in \eqref{eq:opt_distr_rule}.
\end{theorem}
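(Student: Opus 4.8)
The plan is to build directly on the dual characterization of Theorem~\ref{thm:dual}, which already expresses, for a \emph{fixed} distribution rule $f$, the quantity $C^*$ (and hence $\PoA(f)=1/C^*$) as the optimal value of a maximization in the two scalars $(\lambda,\mu)$. The starting observation is that minimizing $\PoA(f)$ over $f\in\mathcal{F}$ is the same as maximizing $C^*(f)=1/\PoA(f)$, so I would write
\[
\min_{f\in\mathcal{F}} \PoA(f) = \frac{1}{\displaystyle\max_{f\in\mathcal{F}} C^*(f)} = \frac{1}{\displaystyle\max_{f\in\mathcal{F},\,\lambda\ge0,\,\mu} \mu},
\]
where the innermost maximization is subject to the family of dual constraints of Theorem~\ref{thm:dual}, now viewed jointly over $(f,\lambda,\mu)$. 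The difficulty is immediate: each constraint contains the product $\lambda\,f(\cdot)$, so the joint problem is bilinear rather than linear, and this bilinearity is the main obstacle to overcome.

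The key step is to remove the bilinearity through the change of variables $g:=\lambda f$. First I would argue that $\lambda=0$ is never optimal: since $\PoA(f)\ge1$ we have $C^*(f)\in(0,1]$, so the optimal $\mu$ is strictly positive, whereas setting $\lambda=0$ forces the constraint indexed by $(a,x,b)=(a,0,0)$ with $a\ge1$ (which lies in $\mathcal{I_R}$ since $a\cdot x\cdot b=0$) to read $-\mu\,c(a)\ge0$, i.e.\ $\mu\le0$. Hence at optimality $\lambda>0$ and the substitution is well defined. Under it, every dual constraint
\[
1_{\{b+x\ge1\}}c(b{+}x) - \mu 1_{\{a+x\ge1\}} c(a{+}x) + \lambda\bigl[af(a{+}x)c(a{+}x) - bf(a{+}x{+}1)c(a{+}x{+}1)\bigr]\ge0
\]
becomes exactly the constraint \eqref{eq:opt_distr_rule} with $g$ in place of $f$, because $\lambda f=g$.

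It then remains to check that optimizing over $(f,\lambda)$ with $\lambda>0$ and optimizing over $g$ directly yield the same value, and this is where the structure of $\mathcal{F}$ enters. The only property I need is that $\mathcal{F}$ is a cone, i.e.\ closed under positive scaling, which holds for the natural choices $\mathcal{F}=\mathbb{R}^n$ or $\mathbb{R}^n_{\ge0}$ of the paper; equivalently, one can invoke that $\PoA$ is invariant under positive scaling of $f$, since multiplying $f$ by $\lambda>0$ rescales every player cost $J_i$ by the same factor and leaves $\NE(G)$ unchanged. Given the cone property, both directions are routine: for any feasible $(f,\lambda,\mu)$ with $\lambda>0$ the pair $(g,\mu)=(\lambda f,\mu)$ is feasible for \eqref{eq:opt_distr_rule} with the same objective, and conversely any feasible $(g,\mu)$ lifts to $(f,\lambda,\mu)=(g,1,\mu)$. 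Hence the two optimal values coincide and equal $\mu^*$.

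Finally I would close the loop to identify $f^*$. Writing $g^*$ for an optimizer of \eqref{eq:opt_distr_rule}, the pair $(\lambda,\mu)=(1,\mu^*)$ is feasible for the Theorem~\ref{thm:dual} program at $f=g^*$, so $C^*(g^*)\ge\mu^*$; combined with $C^*(g^*)\le\max_{f}C^*(f)=\mu^*$ this forces $C^*(g^*)=\mu^*$, i.e.\ $\PoA(g^*)=1/\mu^*=\min_{f}\PoA(f)$. Thus the LP optimizer $g^*$ is an optimal distribution rule and $\PoA(f^*)=1/\mu^*$, as claimed. Existence of the optimizer is then standard, since \eqref{eq:opt_distr_rule} is a finite linear program, linear in the $n+1$ unknowns $(f(1),\dots,f(n),\mu)$ over the cone $\mathcal{F}$, with $\mathcal{O}(n^2)$ constraints indexed by $\mathcal{I_R}$. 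The genuinely delicate point throughout is the bilinear-to-linear reduction, and in particular the justification that the scaling freedom in $\lambda$ can be absorbed into $f$ without changing the attainable worst-case ratio.
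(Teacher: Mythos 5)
Your proof is correct and follows essentially the same route as the paper: both start from the dual program of Theorem~\ref{thm:dual}, view it jointly over $(f,\lambda,\mu)$, and remove the bilinearity by absorbing $\lambda$ into the distribution rule via $g=\lambda f$ (the paper's $\tilde f$), relying on the scale-invariance of the equilibrium conditions. You are in fact slightly more careful than the paper on two points it leaves implicit, namely ruling out $\lambda=0$ and noting that $\mathcal{F}$ must be closed under positive scaling for $\lambda f$ to remain admissible.
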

\begin{proof}
    For the proof that $\argmin_{f \in \mathcal{F}} \PoA(f)$ is attained, meaning that the problem is well posed, see \cite[Lemma 4]{paccagnan2018distributed}. Solving this problem means finding the distribution rule that maximizes $C^*$ in Theorem \ref{thm:dual},
    \begin{align*}
        f^* \in \argmax_{f \in \mathcal{F}} &\max_{\lambda \in \mathbb{R}, \mu \in \mathbb{R}} \mu\\
        \text{s.t. } & 1_{\{b+x\geq1\}}c(b+x) - \mu 1_{\{a+x\geq1\}} c(a+x) \\
        & \quad+ \lambda e(a,x,b) \geq 0, \forall (a,x,b) \in \mathcal{I_R}\text{,}
    \end{align*}
    where $e(a,x,b)$ is defined as in Theorem \ref{thm:dual}. To avoid having to solve a nonlinear program, we combine $\lambda$ and $f$ in $\Tilde{f}(j) := \lambda f(j)$, and note that when $(a,x,b) = (0,0,1)$, then $\Tilde{f}(1) = \lambda f(1) \leq 1$, and that $\lambda \geq 1/f(1) > 0$ since $f>0$. We can now merge the two $\max$ operators to get
    \begin{align*}
        &\Tilde{f}^* \in \argmax_{\Tilde{f} \in \mathbb{R}_{\geq0}^n, \Tilde{f}(1) \geq 1, \mu \in \mathbb{R}} \mu\\
        \text{s.t. } & 1_{\{b+x\geq1\}}c(b+x) - \mu 1_{\{a+x\geq1\}} c(a+x) \\
        & \quad+ \lambda \Tilde{e}(a,x,b) \geq 0, \forall (a,x,b) \in \mathcal{I_R}\text{.}
    \end{align*}
    where $\Tilde{e}(a,x,b) = a\Tilde{f}(a\!{+}x)c(a\!{+}x)- b\Tilde{f}(a\!{+}x\!{+}1)c(a\!{+}x\!{+}1)$ for all $(a,x,b) \in \mathcal{I}$. We note that $\Tilde{f}^*$ must be feasible as $\Tilde{f}^* = \lambda f^*(j)$ and $f^* \in \mathcal{F}$, and that $\PoA(\Tilde{f}^*) = \PoA(f^*)$ as equilibrium conditions are invariant to scaling. 
\end{proof}

We have successfully developed a linear program approach to designing distribution rules that minimize the price of anarchy of a cost-minimization game, given the maximum number of agents in the game and the cost function. Next, we motivate these results by examining a class of cost-minimization games.
\section{Special Case: Convex Cost Functions} \label{sec:convex}
In this section, we explore the implications of the linear program developed in Section \ref{sec:lin-prog} for a particular class of cost-minimization games. We consider the class of games for which $c(j)$ is convex and nondecreasing, i.e. for all $j \in [n-1]$,
\begin{align*}
    c(j+1) &\geq c(j) \\
    c(j+1)-c(j) &\geq c(j)-c(j-1)\text{.}
\end{align*}

Note that these properties of $c$ imply that $f_{\textrm{MC}}(j) \geq f_{\textrm{SV}}(j)$, as $c(j) - c(j-1) \geq \frac{c(j) - c(0)}{j} \text{, } \forall j > 1$, and we assumed earlier that $c(0) = 0$. Only in this section, we also assume, without loss of generality, that $f(1)c(1) = 1$.

\begin{theorem} \label{thm:c_convex}
    For $f(j)c(j)$ nondecreasing, $c(j)$ convex and nondecreasing, and $f(j) \leq f_{\textrm{MC}}(j) \text{, } \forall j \in N$, we can rewrite Corollary \ref{cor:nondecr} as $\PoA(f) = 1/C^*$, where
    \begin{IEEEeqnarray*}{rCl}
        C^* & = & \min_{\lambda \in [0, 1], \mu \in \mathbb{R}} \mu \\
        & \text{s.t.} & \mu c(j) \!\leq\! c(l)\! +\! \lambda [\min\{j, n\!-\!l\}f(j)c(j)\! \\
        && - \!\min\{l, n\!-\!j\}f(j\!+\!1)c(j\!+\!1)]\text{,} \quad l \leq j \in N
    \end{IEEEeqnarray*}
\end{theorem}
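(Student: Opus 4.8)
The plan is to start from the already-simplified dual of Corollary~\ref{cor:nondecr}, which applies because $f(j)c(j)$ is assumed nondecreasing, and to perform three reductions: (i) merge its two constraint families into a single family written with the $\min\{\cdot,\cdot\}$ operators, (ii) replace the constraints indexed by $j=0$ with the box constraint $\lambda \in [0,1]$, and (iii) discard every constraint with $l > j$ as redundant. Reduction (i) is purely notational: when $j+l \le n$ one has $\min\{j,n-l\}=j$ and $\min\{l,n-j\}=l$, which recovers the first family of Corollary~\ref{cor:nondecr}, while when $j+l>n$ one has $\min\{j,n-l\}=n-l$ and $\min\{l,n-j\}=n-j$, recovering the second.

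For reduction (ii), I would isolate the constraints of Corollary~\ref{cor:nondecr} corresponding to $j=0$. Since $c(0)=0$, these carry no dependence on $\mu$ and collapse to $\lambda \le c(l)/\big(l\, f(1)c(1)\big) = c(l)/l$ for every $l \in N$, using the normalization $f(1)c(1)=1$ adopted in this section. Because $c$ is convex with $c(0)=0$, the secant slope $c(l)/l$ is nondecreasing in $l$, so its minimum over $N$ is $c(1)/1 = 1$. Hence every feasible $\lambda$ obeys $0 \le \lambda \le 1$, and conversely imposing $\lambda \in [0,1]$ is equivalent to retaining all the $j=0$ constraints; this lets me drop them and restrict the remaining constraints to $j \in N$.

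The substance of the proof is reduction (iii). For fixed $j \in N$ and $\lambda \in [0,1]$, I would view the right-hand side as a function $h(l)$ of $l \in \{0,\dots,n\}$ and argue that $h$ is nondecreasing on $\{j, j+1, \dots, n\}$, so that its minimum over $l$ --- the binding constraint for that $j$ --- is always attained at some $l \le j$. To establish the monotonicity I would compute the forward difference $h(l+1)-h(l)$ separately on the region $j+l \le n$, where it equals $c(l+1)-c(l) - \lambda f(j+1)c(j+1)$, and on the region $j+l > n$, where it equals $c(l+1)-c(l) - \lambda f(j)c(j)$. Convexity of $c$ gives $c(l+1)-c(l) \ge c(j+1)-c(j)$ and $c(l+1)-c(l) \ge c(j)-c(j-1)$ whenever $l \ge j$, while the hypothesis $f \le f_{\textrm{MC}}$ yields $f(j+1)c(j+1) \le c(j+1)-c(j)$ and $f(j)c(j) \le c(j)-c(j-1)$; combined with $\lambda \le 1$ these force each forward difference to be nonnegative for $l \ge j$. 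Since dropping constraints can only enlarge the feasible set, the reduced program is a relaxation of Corollary~\ref{cor:nondecr}; the monotonicity shows the dropped constraints are implied at every feasible $(\lambda,\mu)$, so the two optima coincide and $\PoA(f)=1/C^*$ follows.

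The main obstacle is reduction (iii), and specifically the single transition $l=n-j$ between the two pieces of the $\min$, where the expression for $h$ changes form and the forward difference becomes $c(n-j+1)-c(n-j) - \lambda f(j)c(j)$; this too must be shown nonnegative. The resolution is to note that this transition only lies in the range $l \ge j$ that we care about when $n-j \ge j$, in which case convexity again bounds $c(n-j+1)-c(n-j)$ below by $c(j)-c(j-1) \ge f(j)c(j) \ge \lambda f(j)c(j)$; when $n-j<j$ the transition sits in the irrelevant range $l<j$ and may be ignored. Keeping both hypotheses $f \le f_{\textrm{MC}}$ and $\lambda \le 1$ active throughout this case analysis is where the care is needed, and it is exactly these two ingredients that distinguish the present statement from Corollary~\ref{cor:nondecr}~ii), whose stronger requirement $f \le f_{\textrm{SV}}$ would instead pin $\lambda^\ast$ to the single value $1$.
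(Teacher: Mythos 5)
Your proposal is correct and follows essentially the same route as the paper's proof: merge the two constraint families of Corollary \ref{cor:nondecr}(i) via the $\min$ operators, extract $\lambda\in[0,1]$ from the $j=0$ constraints using convexity of $c$ and the normalization $f(1)c(1)=1$, and then show the $l>j$ constraints are dominated using convexity, $f\le f_{\textrm{MC}}$, and $\lambda\le 1$. The only (cosmetic) difference is that you establish the domination by forward differences of the right-hand side in $l$ rather than by the paper's direct secant comparison between $(j,l)$ and $(j,j)$, and in doing so you handle the transition at $l=n-j$ explicitly, which the paper dismisses as ``almost identical.''
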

\begin{proof}
    Observe that if we were to write this theorem's claim for $j+l\leq n$ and $j+l>n$, separately, it would resemble Corollary \ref{cor:nondecr}(i), with $\lambda = 1$ and $j \geq l$. Note that when \(j=0\) and \(1 \leq j+l \leq n\), the constraints in Corollary \ref{cor:nondecr} become,
    \[ \lambda \leq \frac{1}{f(1)\,c(1)}\,\frac{c(l)}{l}. \]
    Due to the concavity of \(c\), the right-hand side is minimized for \(l = 1\), and $\lambda \leq 1/f(1)c(1) = 1$, by assumption. We continue by demonstrating that the constraints on $\mu$ when $j = l$ are more strict than those when $j < l$, for $\lambda \leq 1$.
    
    When $1 \leq j+l \leq n$, it is sufficient to show that,
    \[ 0 \leq c(l) - c(j) + \lambda (j-l)f(j+1)c(j+1)\text{.}\]
    For convex, nondecreasing cost function $c$, $l > j$, and $\lambda \leq 1$,
    \[ c(l) \geq c(j) + \lambda(c(j+1) - c(j))(l-j)\text{.}\]
    Thus,
    \begin{align*}
        &c(l) - c(j) + \lambda (j-l)f(j+1)c(j+1)  \\
        &\geq c(j) + \lambda(c(j+1) - c(j))(l-j) - c(j)\\
        & \quad - \lambda (l-j)f(j+1)c(j+1) \\
        &= \lambda(l-j)(c(j+1) - c(j) - f(j+1)c(j+1)) \geq 0\text{,}
    \end{align*}
    where the final inequality relies on $c(j+1) - c(j) \geq f(j+1)c(j+1)$, which is true for $f(j) \leq f_{\textrm{MC}}(j) \text{, } \forall j \in N$, and $l>j$. Note that we need not consider the case when $j = n$ because there is no $l \in N$ such that $l > j = n$.
    
    The proof that $\mu$ is more strictly constrained for $l=j$ than for $l>j$, when $j+l > n$ is almost identical, and also uses the requirement that $f(j) \leq f_{\textrm{MC}}(j) \text{, } \forall j \in N$.
\end{proof}

The expression for $C^*$ in Theorem \ref{thm:c_convex} may further simplify when the distribution rule is known. For the two distribution rules we have formally defined in this paper, we can explicitly write expressions for the price of anarchy. We have that $\PoA(f_{\textrm{SV}}) = 1/C_{\textrm{SV}}^*$, where,
\begin{align*}
C_{\textrm{SV}}^* & {=}  \min_{l \leq j \in N} \left\{ \frac{c(l)}{c(j)} {+} \frac{\min\{j,n{-}l\}}{j}{-}\frac{\min\{l, n{-}j\}c(j {+}1)}{(j{+}1)c(j)} \right\}
\intertext{and $\PoA(f_{\textrm{MC}}) = 1/C_{\textrm{MC}}^*$, where,}
C_{\textrm{MC}}^* & {=} 1 {+} \min_{j \in N} \left\{ \frac{\min\{j,n{-}j\}}{c(j)} [2c(j){-}c(j{-}1){-} c(j{+}1)] \right\}. 
\end{align*}
The expression for $C_{\textrm{SV}}^*$ results from substituting $f(j) = 1/j$ into the equation in Theorem \ref{thm:c_convex}, and from Corollary \ref{cor:nondecr} ii). The proof for $C_{\textrm{MC}}^*$ is left out for conciseness, but amounts to showing that the constraints for $j > l$ are stricter than those for $j=l$, due to similar reasoning as showing that we need only consider $j \geq l$ in the proof of Theorem \ref{thm:c_convex}.

\section{Conclusions}
In this paper, we investigated distributed algorithm design for finding approximate solutions to a combinatorial problem that is difficult to solve, in general. 
By approaching the problem from a game-theoretic perspective, we were able to develop a linear program approach to calculate the price of anarchy corresponding to a specified system objective. 
Next, we proposed a method of designing the distribution rule to minimize the price of anarchy for a given system. 
For the case of convex and nondecreasing cost functions, we have shown how the linear program further simplifies. 
Additionally, we have demonstrated that our designed distribution rule has superior performance over the two most commonly used (i.e. Shapley value and marginal contribution).

\bibliographystyle{IEEEtran}
\bibliography{references}
\null \vfill
\appendix 
\section{Appendix}

\begin{lemma}\label{lem:const_geq}
    The optimal value of the relaxed problem, $D^*$, is greater than or equal to $C^* = 1/\PoA(f)$.
\end{lemma}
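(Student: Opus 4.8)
The plan is to prove $D^* \geq C^*$ by showing that the parametrization $\theta(a,x,b)$ maps every feasible instance of the relaxed problem into the feasible region of the linear program of Theorem \ref{thm:poa_as_lp}, while preserving the objective. Concretely, I would fix an arbitrary game $G \in \hat{\mathcal{G}}_f$ that is feasible for the relaxed problem, i.e. one whose (reduced, two-action) allocations $a^{\nel}, a^{\opt}$ satisfy the aggregate constraint $\sum_{i} J_i(a^{\nel}) \leq \sum_i J_i(a_i^{\opt}, a_{-i}^{\nel})$ together with the normalization $C(a^{\nel}) = 1$. Assigning to each $\theta(a,x,b)$ the total value $\sum_r v_r$ of the resources whose multiplicity pattern with respect to $(a^{\nel},a^{\opt})$ is exactly $(a,x,b)$, I obtain a candidate point with $\theta(a,x,b) \geq 0$, since the values are nonnegative. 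This is the counterpart of the (easier) realizability direction that yields $D^* \leq C^*$.

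The verification then proceeds constraint by constraint. Grouping the system cost by resource type gives $C(a^{\nel}) = \sum_{a,x,b} c(a+x)\theta(a,x,b)$ and $C(a^{\opt}) = \sum_{a,x,b} c(b+x)\theta(a,x,b)$, so the normalization $C(a^{\nel}) = 1$ is precisely the equality constraint of the program and $C(a^{\opt})$ is exactly its objective. It remains to recast the aggregate Nash inequality as the program's first constraint. Once this is done, $\theta$ is feasible, hence its objective $C(a^{\opt})$ is at least the minimum $C^*$, and taking the infimum over all feasible games yields $D^* \geq C^*$.

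The main obstacle is the bookkeeping behind the aggregate Nash inequality. Writing $\sum_i J_i(a^{\nel})$ by resource, each of the $a+x$ agents covering a resource of type $(a,x,b)$ in the equilibrium pays $v_r c(a+x) f(a+x)$. For the deviation term $\sum_i J_i(a_i^{\opt}, a_{-i}^{\nel})$ I must distinguish, for each resource, the $x$ agents that cover it in both $a^{\nel}$ and $a^{\opt}$ (who upon deviating still see load $a+x$) from the $b$ agents that cover it only in $a^{\opt}$ (who, adding themselves to the $a+x$ equilibrium users, see load $a+x+1$). This asymmetry is exactly what produces the expression $af(a+x)c(a+x) - bf(a+x+1)c(a+x+1)$ after cancellation of the $x$ common terms, so that the aggregate Nash condition becomes $\sum_{a,x,b}[af(a+x)c(a+x) - bf(a+x+1)c(a+x+1)]\theta(a,x,b) \leq 0$, i.e. the first LP constraint. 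I would finally check the boundary conventions $f(0)=c(0)=0$ so that resource types with $a+x=0$ or $b+x=0$ contribute correctly through the indicators $1_{\{a+x\geq1\}}$ and $1_{\{b+x\geq1\}}$, completing the argument.
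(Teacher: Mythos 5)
Your proposal proves the wrong inequality. Mapping a relaxed-feasible game $G$ to its parametrization $\theta$ and checking that $\theta$ is feasible for the linear program with the same objective value only shows that the LP value is a lower bound on $D^*$ --- this is the easy direction, and it is already carried out in the body of the proof of Theorem \ref{thm:poa_as_lp}. The content of Lemma \ref{lem:const_geq} is the opposite containment: the quantity $C^*$ in the lemma is $1/\PoA(f)$, the value of the \emph{original} problem with the $n$ individual Nash constraints, and the claim $D^* \geq C^*$ says that relaxing those $n$ constraints to their sum loses nothing. Your argument never leaves the relaxed problem: a game satisfying only $\sum_i J_i(a^{\nel}) \leq \sum_i J_i(a_i^{\opt},a_{-i}^{\nel})$ need not have $a^{\nel}$ as an actual Nash equilibrium, so nothing you establish connects $D^*$ back to the price of anarchy. (You also have the labels reversed: the ``realizability'' step is the hard direction, not the easy one.)

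The missing idea is a realizability construction: starting from an arbitrary feasible point $\theta(a,x,b)$ of the parametrized program, one must exhibit a genuine game whose equilibrium/optimum pair induces exactly that $\theta$ and in which the equilibrium allocation satisfies every \emph{individual} Nash constraint. The paper does this by replicating each resource type $n$ times with value $\theta(a,x,b)/n$ and letting agent $i$ choose cyclically shifted blocks of $a+x$ (respectively $b+x$) consecutive copies; the resulting game is symmetric across agents, and the potential function $\phi(a) = \sum_{r}\sum_{j=1}^{|a|_r} v_r f(j)c(j)$ shows that each agent's unilateral deviation changes the potential by $\tfrac{1}{n}\sum_{a,x,b}\theta(a,x,b)\bigl[bf(a{+}x{+}1)c(a{+}x{+}1) - af(a{+}x)c(a{+}x)\bigr]$, which is nonnegative precisely because of the single aggregate constraint. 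This is how one aggregate inequality buys all $n$ individual Nash conditions, and it is the step your proposal omits; without it the relaxation could in principle be strictly loose and the equality $\PoA(f)=1/C^*$ would not follow.
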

\begin{proof}
We prove the lemma by constructing a game in the parametrization that satisfies the constraints of the original problem. Setup a congestion game with a feasible point $\theta(a,x,b)$ with value $v$. This game consists of the resources $r(a,x,b,i)$ for all $i \in N$ and for all $(a,x,b)\in \mathcal{I}$. Each resource $r(a,x,b,i)$ is assigned the value $\theta(a,x,b)/n$. In the Nash equilibrium strategy, each agent $i$ selects $a+x$ consecutive resources from each set $\{r(a,x,b,j)\} \forall j \in N, (a,x,b) \in \mathcal{I}$ starting with resource $r(a,x,b,i)$. In the optimal strategy, each agent $i$ selects $b+x$ consecutive resources from the sets $\{r(a,x,b,j)\} \forall j \in N, (a,x,b) \in \mathcal{I}$, starting from resource $r(a,x,b,(i-b) \mod n)$. It is important to note that the allocations $a_i^{\nel}$ and $a_i^{\opt}$ have $|a_i^{\nel} \cap a_i^{\opt}| = x$.

We must verify that $a^{\nel}$ is a Nash equilibrium. This game is known to be a potential game, with potential function,
\[ \phi(a) = \sum_{r\in \mathcal{R}}\sum^{|a|_r}_{j=1}v_rf(j)c(j). \]
Now we show that $\phi(a_i^{\opt}, a_{-i}^{\nel}) - \phi(a^{\nel}) \geq 0 \quad \forall i \in N$, which implies that $J_i(a_i^{\opt}, a_{-i}^{\nel}) - J_i(a^{\nel}) \geq 0 \quad \forall i \in N$ by the definition of potential games. We have that
\begin{align*}
    \phi(a^{\nel}) &= \sum_{j \in N} \sum_{a,x,b}\frac{\theta(a,x,b)}{n}\sum_{i=1}^{a+x}f(i)c(i) \\
    &= \frac{1}{n} \sum_{a,x,b} n \theta(a,x,b)\sum_{i=1}^{a+x}f(i)c(i)\text{.}
\end{align*}
When the agent $i$ deviates from allocation $a^{\nel}$ to $(a_i^{\opt}, a_{-i}^{\nel})$, there is one additional agent selecting $b$ resources, one less agent selecting $a$ resources, and $n-a-b$ resources that are selected by the same number of agents. We can therefore write that,
\begin{align*}
    &\phi(a_i^{\opt}, a_{-i}^{\nel}) - \phi(a^{\nel}) = \\
    {=}\;& \frac{1}{n} \sum_{a,x,b} \theta(a,x,b) \Big(b \sum_{i=1}^{a+x+1}f(i)c(i)+ a \sum_{i=1}^{a+x-1}f(i)c(i)\\
    &{+}\: (n-a-b) \sum_{i=1}^{a+x}f(i)c(i) \Big) - \phi(a^{\nel})\\
    {=}\;& \frac{1}{n} \sum_{a,x,b} \theta(a,x,b) \Big(bf(a+x+1)c(a+x+1)\\
    &{-}\: af(a+x)c(a+x) \Big).
\end{align*}
Which must be greater than or equal to zero, by both the original and the relaxed constraints.
\end{proof}

\end{document}